\documentclass[12pt]{article}
\usepackage{multirow}
\usepackage[T1]{fontenc}
\usepackage{algpseudocode}
\usepackage{algorithm}

\usepackage{graphicx}
\usepackage{enumerate}
\usepackage{rotating}
\usepackage{tikz}
\pagestyle{myheadings}
\textheight 9in \textwidth 6in 
\leftmargin 0.10in \rightmargin 0.10in
\topmargin -.30in \headsep .5in
\evensidemargin 0.10in \oddsidemargin 0.3in

\usepackage{amsmath, amsthm, amsfonts, amssymb}

\newtheorem{theorem}{Theorem}
\newtheorem{lemma}[theorem]{Lemma}
\newtheorem{corollary}[theorem]{Corollary}

\newtheorem{proposition}[theorem]{Proposition}
\newtheorem{conjecture}{Conjecture}
\newtheorem{definition}{Definition}

\begin{document}
\title{\bf Equitable total coloring of corona of cubic graphs}
\author{
Hanna Furma\'{n}czyk \\
\small{\emph{Institute of Informatics, Faculty of Mathematics, 
Physics and Informatics,}} \\
\small{\emph{University of Gda\'{n}sk, Wita Stwosza 57, 80-308 Gda\'{n}sk, Poland}}\\
\small{\emph{hanna.furmanczyk@inf.ug.edu.pl}}\\
\\ 
Rita Zuazua\\
\small{\emph{Department of Mathematics, Faculty of Sciences}}\\
\small{\emph{National Autonomous University of Mexico}}\\
\small{\emph{Ciudad Universitaria, Coyoacan, 04510 Mexico, DF, Mexico}}\\
\small{\emph{ritazuazua@ciencias.unam.mx}}
}
\date{}

\maketitle
\begin{abstract}
The minimum number of total independent partition sets of $V \cup E$ of a graph $G=(V,E)$ is called the \emph{total chromatic number} of $G$, denoted by $\chi''(G)$. If the difference between cardinalities of any
two total independent sets is at most one, then the minimum number of total independent partition sets of $V \cup E$ is called the \emph{equitable total chromatic number}, and is denoted by $\chi''_=(G)$.
 
In this paper we consider equitable total coloring of coronas of cubic graphs, $G \circ H$. It turns out that, independly on the values of equitable total 
chromatic number of factors $G$ and $H$, equitable total chromatic number of corona $G \circ H$ is equal to $\Delta(G \circ H) +1$. Thereby, we confirm Total Coloring 
Conjecture (TCC), posed by Behzad in 1964,
and Equitable Total Coloring Conjecture (ETCC), posed by Wang in 2002, for coronas of cubic graphs.
As a direct consequence we get that all coronas of cubic graphs are of Type 1.
\end{abstract}

{\bf Keywords:} {equitable coloring, total coloring, equitable total coloring, cubic graphs.}

{\bf 2010 Mathematics Subject Classification: 05C15, 05C76}

\section{Introduction}
Graph coloring is one of the most important problems in graph theory. As an extension of proper vertex and edge coloring, the concept of total coloring is developed. In the paper we consider one 
of non-classical models of total coloring, namely equitable total coloring.  

A \emph{$k$-total-coloring} of $G$ is an assignment of $k$ colors to the
edges and vertices of $G$, so that adjacent or incident elements obtain
different colors. The \emph{total chromatic number} of $G$, denoted by
$\chi''(G)$, is the smallest $k$ for which $G$ has a
$k$-total-coloring. Clearly, $\chi''(G)\geq \Delta(G) +1$, where $\Delta(G)$ is the maximum degree of $G$.  Well known Total Coloring Conjecture~\cite{behzad,Vizing} states that the total 
chromatic number of 
any graph is at most $\Delta(G)+2$. 

\begin{conjecture}{\emph{[TCC]}}{\emph{\cite{behzad,Vizing}}}
For any graph $G$ the following inequalities hold
$$ \Delta(G)+1 \leq \chi''(G) \leq \Delta(G)+2.$$
\end{conjecture}

Although the hypothesis has been known since 1964, it has been proven only for some specific classes of graphs, in particular for cubic graphs~\cite{vijayaditya1971}. Graphs with $\chi''(G) = \Delta(G)+1$ are said to be \emph{Type~1}, 
and graphs with $\chi''(G) = \Delta(G)+2$ are said to be \emph{Type~2}. The problem of deciding whether a graph is Type~1 has been shown to be NP-complete even for cubic bipartite 
graphs~\cite{Sanchez}.

In this paper one of non-classical models of total coloring is considered. A $k$-total-coloring is \emph{equitable} if the cardinalities of any 
two color classes differ by at most one (ref. Fig.\ref{fig:k33}). The smallest $k$ for which $G$ has an equitable $k$-total-coloring is the \emph{equitable total 
chromatic number} of $G$, and it is denoted by $\chi''_{=}(G)$. The concept of equitable total coloring was first presented in \cite{fu}. 
This model of graph coloring has many practical applications. Every time when we have to divide a system with binary conflict relations into equal or almost equal 
conflict-free subsystems we can model this situation by means of equitable graph coloring. In particular, one motivation for equitable coloring suggested by Meyer 
\cite{meyer} concerns scheduling problems. Furma\'nczyk \cite{furm} mentions a specific application of this type of scheduling problem, namely, assigning university 
courses to time slots in a way that avoids scheduling incompatible courses at the same time and spreads the courses evenly among the available time slots. The topic of equitable coloring, also
its total version, was widely discussed in literature. Similarly to the situation with proper total coloring, it was conjectured that the equitable total chromatic number of any graph is at most $\Delta(G)+2$.
\begin{conjecture}{\emph{[ETCC]}}{\emph{\cite{wang}}}
For any graph $G$ the following inequalities hold
$$ \Delta(G)+1 \leq \chi''_=(G) \leq \Delta(G)+2.$$
\end{conjecture}
This conjecture was proven among others for cubic graphs in~\cite{wang}. 
Wang \cite{wang} proved that every cubic graph has an equitable total coloring with 5 colors. Recently, it has been shown that the problem of deciding whether the equitable total chromatic number of a bipartite cubic graph is~4 is NP-complete~\cite{DianaICGT}.

\begin{figure}
    \centering
    \includegraphics[scale=0.5]{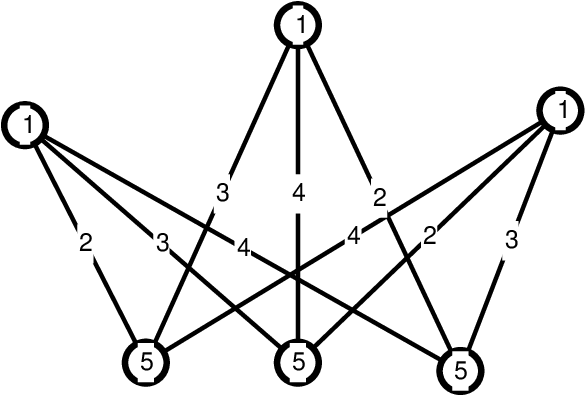}
    \caption{An exemplary equitable total 5-coloring of $K_{3,3}$.}
    \label{fig:k33}
\end{figure}

One can ask whether there exist graphs with equitable total chromatic number greater than total chromatic number. It turns out the answer to this question is positive. There are known 
examples of cubic graphs such that their total chromatic number is strictly less than their equitable total chromatic 
number~\cite{DianaICGT, fu}. 

In this paper we ask about the value of the equitable total coloring number of graph products. The problem was considered for some Cartesian products of graphs \cite{chunling}. Moreover, graph products are interesting
and useful in many situations. The complexity of many problems, also equitable coloring, that deal with very large and complicated graphs is reduced greatly if one is able to fully characterize the properties of less complicated prime factors. We continue the research on graph products, but this time as a factor we take cubic graphs and we consider corona product of graphs. 

Given two simple graphs $G$ and $H$, the \emph{corona product of $G$ and $H$} is the graph $G \circ H$ obtained by taking one copy of $G$, $\left|V(G)\right|$ copies 
of $H$, and making the $i$\textsuperscript{th} vertex of $G$ adjacent to every vertex of the $i$\textsuperscript{th} copy of $H$, $H_i$ (ref. Fig.~\ref{cubical}). 
This graph product was introduced by Frucht and Harary in 1970 \cite{frucht}.

\begin{figure}[htb]
\begin{center}
\begin{tikzpicture}[scale=1]
	\coordinate (LU) at (-0.5, 0.5); \coordinate (LD) at (-0.5, -0.5);
	\coordinate (RU) at (0.5, 0.5); \coordinate (RD) at (0.5, -0.5);
	\coordinate (LU1) at (-2+0.6, 2+0.6);
	\coordinate (LU2) at (-2+0.2, 2+0.2);
	\coordinate (LU3) at (-2-0.2, 2-0.2);
	\coordinate (LU4) at (-2-0.6, 2-0.6);
	\coordinate (LD1) at (-2+0.6,-2-0.6);
	\coordinate (LD2) at (-2+0.2,-2-0.2);
	\coordinate (LD3) at (-2-0.2,-2+0.2);
	\coordinate (LD4) at (-2-0.6,-2+0.6);
	\coordinate (RU1) at ( 2+0.6, 2-0.6);
	\coordinate (RU2) at ( 2+0.2, 2-0.2);
	\coordinate (RU3) at ( 2-0.2, 2+0.2);
	\coordinate (RU4) at ( 2-0.6, 2+0.6);
	\coordinate (RD1) at ( 2+0.6,-2+0.6);
	\coordinate (RD2) at ( 2+0.2,-2+0.2);
	\coordinate (RD3) at ( 2-0.2,-2-0.2);
	\coordinate (RD4) at ( 2-0.6,-2-0.6);

	\foreach \i in {LU,LD,RU,RD,LU1,LU2,LU3,LU4,LD1,LD2,LD3,LD4,RU1,RU2,RU3,RU4,RD1,RD2,RD3,RD4}%
		\fill (\i) circle(2pt);
		
	\foreach \a/\b in {LU/LD,LU/RU,LU/RD,LD/RU,LD/RD,RU/RD} \draw (\a)--(\b);
	\foreach \a/\b in {LU/LU1,LU/LU2,LU/LU3,LU/LU4,LD/LD1,LD/LD2,LD/LD3,LD/LD4,%
		RU/RU1,RU/RU2,RU/RU3,RU/RU4,RD/RD1,RD/RD2,RD/RD3,RD/RD4}%
		\draw (\a)--(\b);
	\foreach \a/\b in {LU1/LU2,LU2/LU3,LU3/LU4,LD1/LD2,LD2/LD3,LD3/LD4,%
		RU1/RU2,RU2/RU3,RU3/RU4,RD1/RD2,RD2/RD3,RD3/RD4} \draw (\a)--(\b);
	\draw (LU1) .. controls +(-0.4,0) and +(0,+0.4) .. (LU3);
	\draw (LU2) .. controls +(-0.4,0) and +(0,+0.4) .. (LU4);
	\draw (LD1) .. controls +(-0.4,0) and +(0,-0.4) ..(LD3);
	\draw (LD2) .. controls +(-0.4,0) and +(0,-0.4) ..(LD4);
	\draw (RU3) .. controls +(0.4,0) and +(0,0.4) ..(RU1);
	\draw (RU4) .. controls +(0.4,0) and +(0,0.4) ..(RU2);
	\draw (RD3) .. controls +(0.4,0) and +(0,-0.4) ..(RD1);
	\draw (RD4) .. controls +(0.4,0) and +(0,-0.4) ..(RD2);
	
	\draw (LU1) ..controls +(-0.8,0.2) and +(-0.2,0.8) ..(LU4);
	\draw (LD1) ..controls +(-0.8,-0.2) and +(-0.2,-0.8) ..(LD4);
	\draw (RU1) ..controls +(0.2,0.8) and +(0.8,0.2) ..(RU4);
	\draw (RD1) ..controls +(0.2,-0.8) and +(0.8,-0.2) ..(RD4);
\end{tikzpicture}
\end{center}
\caption{Corona $K_4 \circ K_4$.}
\label{cubical}
\end{figure}
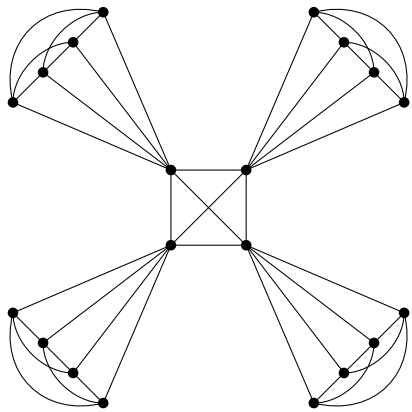

In this paper we focus on coronas of two arbitrary cubic graphs. This kind of graph product seems to be interesting because corona graphs lie often close to the boundary between easy and hard problems \cite{ars}.
Here, we ask whether the fact of being the cubic graph of Type 1 or 2 has the influence on the value of the equitable total chromatic number of the corona of such factors. It turns out that the answer is negative.
Let $G$ and $H$ be two cubic graphs with $\left|V(G)\right|=n_G$ and $\left|V(H)\right|=n_H$ vertices, 
respectively. It is easy to see that the 
maximum degree of the corona graph $G\circ H$ is $\Delta(G\circ H)=n_H+3$. We prove that (i) the total chromatic number of $G\circ H$ is equal to 
$\Delta(G\circ H)+1=n_H+4$, i.e, they are all of Type~1, and (ii) the equitable total chromatic number of the corona graph $G\circ H$ is equal to 
$\Delta(G\circ H)+1=n_H+4$, independently of the type of factors $G$ and~$H$.

\section{Notation and definitions}

In the paper we will use the concept of semi-graphs, introduced by Dantas et al. \cite{DianaICGT}.

\begin{definition}
A \emph{semi-graph} is a triple $G=(V, E, S)$, where $V(G)$ is a set of vertices of $G$, $E(G)$ is a set of edges having two distinct endpoints in $V(G)$, and $S(G)$
is a multiset of \emph{semi-edges} having one endpoint in $V(G)$.
\end{definition}

Note that if $S(G)=\emptyset$ then a semi-graph $G$ is a simple graph. All definitions given below for semi-graphs, that do not require the existence of semi-edges, are also valid for graphs. When it could be confusing we explicitly write graph or semi-graph.
We write edges having endpoints $v$ and $w$ shortly as $vw$ and semi-edges having endpoint $v$ as $v\cdot$. When vertex $v$ is an endpoint of $e\in E \cup S$ we say that $v$ and $e$ are \emph{incident}. Two elements of $E\cup S$ incident to the same vertex, respectively two vertices incident to the same edge, are called \emph{adjacent}. $N(v)$ denotes the \emph{open neighborhood} of a vertex $v \in V(G)$, i.e. the set of adjacent vertices for $v$. $N[v]=N(v)\cup \{v\}$ is the \emph{close neighborhood} of $v$.
The \emph{degree} $\deg(v)$ of a vertex $v$ of $G$ is the number of elements of $E\cup S$ that are incident to $v$. We say that $G$ is $r$-\emph{regular} if the degree of each vertex is equal to $r$.
An exemplary semi-graph $G$ is given in Fig.~\ref{semicorona}, where $V(G)=\{a,b,c,d\}$, $E(G)=\{ab, ac, ad, bc, bd, cd\}$, and $S(G)=\{a\cdot,a\cdot, a\cdot, b\cdot, b\cdot,$ $b\cdot, c\cdot,$ $c\cdot, c\cdot, d\cdot, d\cdot, d\cdot\}$.

For a given graph $G=(V(G),E(G))$, where $V(G)=\{v_1,\ldots,v_{n_G}\}$,  and graph $H=(V(H),E(H))$ with $V(H)=\{u_1,\ldots,u_{n_H}\}$, $|V(H)|=n_H$, for any $i \in \{1,\ldots,n_G\}$ we define an \emph{open fan} of $v_i\in V(G)$ as a set of $n_H$ semi-edges with common endvertex $v_i$ and we denote it by $F_H(v_i)$. A \emph{close fan} $F_H[v_i]$ is a set $F_H(v_i) \cup \{v_i\}$. For any $j \in \{1,\ldots,n_H\}$, we define the \emph{open claw} of $u_j\in V(H)$ as a set of edges in $H$ incident to $u_j$, and we denote it by $I_H(u_j)$. We have $I_H(u_j) \subset E(H)$. A \emph{close claw} of $u_j$, $I_H[u_j]$, is a set $I_H(u_j)\cup \{u_j\}$.

\begin{definition}
A \emph{semi-corona} $G\circ_s H$ of a graph $G=(V(G),E(G))$ and a graph $H=(V(H),E(H))$ is the semi-graph $G'=(V(G'),E(G'),S(G'))$, where 
$$V(G')=V(G),$$ $$E(G')=E(G),$$ $$S(G')=\bigcup_{v \in V(G)} F_H(v).$$ 
\end{definition}

A semi-corona $G\circ_s H$ may be also defined as the semi-graph obtained from graph $G$ by adding $n_H$ semi-edges to each vertex of $G$. 
It is easy to see that semi-corona $G \circ_s H$ of a cubic graph $G$ and $n_H$-vertex cubic graph $H$ is $(n_H+3)$-regular semi-graph. An example of semi-corona is given in Fig.~\ref{semicorona}. 

\begin{figure}[htb]
\setlength{\unitlength}{1pt}
\begin{center}
\begin{picture} (80,80) 
\put(0,25){\line(1,0){20}} 
\put(20,25){\line(-1,-1){16.6}}
\put(20,5){\line(0,1){20}}
\put(0,65){\line(1,0){20}} \put(20,65){\line(0,1){20}}
\put(60,25){\line(1,0){20}} 
\put(60,25){\line(1,-1){16.6}}
\put(60,5){\line(0,1){20}}
\put(60,65){\line(1,0){20}} \put(60,65){\line(0,1){20}}

\put(20,25){\circle*4} \put(23,15){$a$}
\put(20,65){\circle*4} \put(23,67){$d$}
\put(60,25){\circle*4} \put(50,15){$b$}
\put(60,65){\circle*4} \put(50,67){$c$}

\put(20,25){\line(1,0){40}}
\put(20,25){\line(0,1){40}}
\put(20,25){\line(1,1){56.6}}
\put(60,25){\line(-1,1){56.6}}
\put(60,25){\line(0,1){40}}
\put(20,65){\line(1,0){40}}

\end{picture}
\caption{Semi-corona $K_4 \circ_s H$, where $H$ is a 3-vertex graph.}\label{semicorona}
\end{center}
\end{figure}
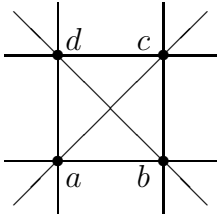

Now, we will define the operation $+_{v_i}$. For a given semi-graph $G \circ_s H=G_0=(V(G_0),E(G_0),$ $S(G_0))$ with $V(G)=\{v_1,\ldots,v_{n_G}\}$ and graph $H=(V(H),E(H))$, we define $G_i=(V(G_i),E(G_i),S(G_i))$ as a semi-graph $G_{i-1}+_{v_i} H$, where $$V(G_i)=V(G_{i-1})\cup V(H_i),$$ $$E(G_i)=E(G_{i-1})\cup E(H_i)\cup \{v_iw: w\in V(H_i)\},$$ $$S(G_i)=S(G_{i-1}) \backslash F(v_i).$$
It is easy to see that $G_{n_G}=G \circ H$. Of course, $H_i$ denotes the $i$-th copy of $H$.
We will name graphs $G_1,\ldots,G_{n_G}$ as \emph{extended semi-coronas} (ref. Fig.~\ref{ext}).

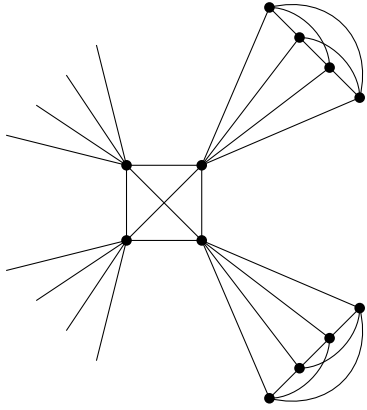
\begin{figure}[htb]
\begin{center}
\begin{tikzpicture}[scale=1]
	\coordinate (LU) at (-0.5, 0.5);
	\coordinate (LD) at (-0.5, -0.5);
	\coordinate (RU) at (0.5, 0.5); \coordinate (RD) at (0.5, -0.5);
	\coordinate (LU1) at (-1.5+0.6, 1.5+0.6);
	\coordinate (LU2) at (-1.5+0.2, 1.5+0.2);
	\coordinate (LU3) at (-1.5-0.2, 1.5-0.2);
	\coordinate (LU4) at (-1.5-0.6, 1.5-0.6);
	\coordinate (LD1) at (-1.5+0.6,-1.5-0.6);
	\coordinate (LD2) at (-1.5+0.2,-1.5-0.2);
	\coordinate (LD3) at (-1.5-0.2,-1.5+0.2);
	\coordinate (LD4) at (-1.5-0.6,-1.5+0.6);
	\coordinate (RU1) at ( 2+0.6, 2-0.6);
	\coordinate (RU2) at ( 2+0.2, 2-0.2);
	\coordinate (RU3) at ( 2-0.2, 2+0.2);
	\coordinate (RU4) at ( 2-0.6, 2+0.6);
	\coordinate (RD1) at ( 2+0.6,-2+0.6);
	\coordinate (RD2) at ( 2+0.2,-2+0.2);
	\coordinate (RD3) at ( 2-0.2,-2-0.2);
	\coordinate (RD4) at ( 2-0.6,-2-0.6);

	\foreach \i in {LU,LD,RU,RD,RU1,RU2,RU3,RU4,RD1,RD2,RD3,RD4}%
		\fill (\i) circle(2pt);
		
	\foreach \a/\b in {LU/LD,LU/RU,LU/RD,LD/RU,LD/RD,RU/RD} \draw (\a)--(\b);
	\foreach \a/\b in {LU/LU1,LU/LU2,LU/LU3,LU/LU4,LD/LD1,LD/LD2,LD/LD3,LD/LD4,%
		RU/RU1,RU/RU2,RU/RU3,RU/RU4,RD/RD1,RD/RD2,RD/RD3,RD/RD4}%
		\draw (\a)--(\b);
	\foreach \a/\b in {
		RU1/RU2,RU2/RU3,RU3/RU4,RD1/RD2,RD2/RD3,RD3/RD4} \draw (\a)--(\b);
	\draw (RU3) .. controls +(0.4,0) and +(0,0.4) ..(RU1);
	\draw (RU4) .. controls +(0.4,0) and +(0,0.4) ..(RU2);
	\draw (RD3) .. controls +(0.4,0) and +(0,-0.4) ..(RD1);
	\draw (RD4) .. controls +(0.4,0) and +(0,-0.4) ..(RD2);
	
	\draw (RU1) ..controls +(0.2,0.8) and +(0.8,0.2) ..(RU4);
	\draw (RD1) ..controls +(0.2,-0.8) and +(0.8,-0.2) ..(RD4);
\end{tikzpicture}
\end{center}
\caption{Extended semi-corona $G_2=((K_4 \circ_s K_4) +_{v_1} K_4)+_{v_2} K_4$.}
\label{ext}
\end{figure}

For $k \in \mathbb{N}^{+}$ and given semi-graph $G=(V,E,S)$, a \emph{proper vertex $k$-coloring} of $G$ is a map $c_V: V \rightarrow \{1,\ldots,k\}$ such that $c_V(x) \neq c_V(y)$ for any two adjacent vertices $x$ and $y$. The smallest number of colors admitting such a coloring is named as the \emph{chromatic number} and it is denoted by $\chi(G)$.

Similarly, a \emph{proper edge $k$-coloring} of $G$ is a map $c_{E\cup S} : E \cup S \rightarrow \{1,\ldots,k\}$ such that $c_{E\cup S}(e_1) \neq c_{E\cup S}(e_2)$ for any two adjacent elements $e_1, e_2$ of $E \cup S$. If $S=\emptyset$ then we will write $c_{E} : E \rightarrow \{1,\ldots,k\}$. The smallest number of colors admitting such a coloring is named as the \emph{chromatic index} and it is denoted by $\chi'(G)$.

A \emph{total $k$-coloring} of $G$ is a map $c_T:V\cup E \cup S \rightarrow \{1,\ldots,k\}$ such that
\begin{itemize}
    \item $\left. c_T \right |_V$ is a proper vertex coloring,
    \item $\left. c_T \right |_{E\cup S}$ is a proper edge coloring,
    \item $c_T(e)\neq c_T(v)$ whenever $e\in E \cup S$, $v\in V$ and $e$ is incident to $v$.
\end{itemize}

A vertex (edge (total)) $k$-coloring is \emph{equitable} if the cardinalities of any two color classes differ by at most one.  

For a given vertex (edge (total)) $k$-coloring of a graph $G$, this means for a partition of the appropriate set into $k$ independent color classes $\{P_1,P_2, \ldots, P_k\}$, 
the \emph{vertex $($edge $($total$))$ coloring sequence} $S_V(G)$ ($S_E(G)$ ($S_T(G)$)) is a sequence of their cardinalities, i.e. $(|P_1|,|P_2|, \ldots, |P_k|)$. For the total 5-coloring of $K_{3,3}$ given in Figure \ref{fig:k33} $S_T(K_{3,3})=(3,3,3,3,3)$. For the vertex coloring being restriction of the total coloring to $V$ the sequence $S_V(K_{3,3})=(3,0,0,0,3)$. Similarly we get $S_E(K_{3,3})=(0,3,3,3,0)$.

\section{Equitable coloring of cubic graphs}
Let us remind some known results concerning coloring of cubic graphs that will be useful in the further part of this work. First of all, let us notice that, when we consider only vertex coloring, the chromatic 
number is equal to the equitable chromatic number for all connected 
cubic graphs~\cite{Chen}. This means that every proper vertex coloring of connected cubic graph $G$ with $\chi(G)$ colors can be made equitable without adding new colors. 
\begin{theorem}[\cite{Chen}]
If $G$ is a connected cubic graph then $$\chi(G) = \chi_=(G).$$
\end{theorem}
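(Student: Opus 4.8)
The plan is to prove the nontrivial inequality $\chi_=(Q) \le \chi(Q)$, since the reverse inequality $\chi_=(Q) \ge \chi(Q)$ holds for every graph (an equitable coloring is in particular a proper coloring). By Brooks' theorem, a connected cubic graph $Q$ satisfies $\chi(Q) \le 3$ unless $Q = K_4$, so $\chi(Q) \in \{2,3\}$ apart from the single exception $\chi(K_4) = 4$. I would organize the argument around the value of $\chi(Q)$.

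The boundary cases are immediate. If $Q = K_4$, each of the four color classes in the unique proper $4$-coloring is a singleton, so the coloring is equitable and $\chi_=(Q) = 4 = \chi(Q)$. If $Q$ is bipartite, then $\chi(Q) = 2$; writing the parts as $A, B$ and counting edges gives $3|A| = |E(Q)| = 3|B|$, hence $|A| = |B|$, so the bipartition itself is an equitable $2$-coloring and $\chi_=(Q) = 2 = \chi(Q)$.

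The substantive case is $\chi(Q) = 3$, i.e.\ $Q$ connected, cubic, non-bipartite, and different from $K_4$. Here I would start from an arbitrary proper $3$-coloring with classes $V_1, V_2, V_3$ of sizes $n_1 \ge n_2 \ge n_3$ and show it can be rebalanced, without introducing a fourth color, until $n_1 - n_3 \le 1$ (equivalently each class has size $\lfloor n/3\rfloor$ or $\lceil n/3\rceil$), which is precisely equitability for three classes. The rebalancing uses two moves, each strictly decreasing the imbalance $n_1 - n_3$: a single-vertex shift, recoloring a vertex of $V_1$ with no neighbor in $V_3$ to color $3$; and, when no such vertex exists, a Kempe swap on a suitably chosen connected component of the bipartite subgraph $Q[V_1 \cup V_3]$, which exchanges colors $1$ and $3$ on that component and alters $n_1 - n_3$ by an even amount. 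A standard result such as the Hajnal--Szemer\'edi theorem only yields an equitable coloring with $\Delta(Q)+1 = 4$ colors, so this finer rebalancing is exactly what is needed to stay at three colors.

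The main obstacle is to prove that the rebalancing never gets stuck while the coloring is still inequitable. This is where the structural hypotheses are used: one must show that if every vertex of the largest class meets the smallest class and no color-$(1,3)$ component admits a favorable swap, then $Q$ would be forced to be one of the exceptional graphs --- a complete graph or a balanced complete bipartite graph on two of the classes --- all of which are excluded once we know $Q$ is connected, cubic, $3$-chromatic, and distinct from $K_4$ and $K_{3,3}$ (the latter being bipartite and hence already handled). Establishing this dichotomy, by comparing the number of edges between $V_1$ and $V_3$ against the degree bound $3 n_3$ and analyzing the components of $Q[V_1 \cup V_3]$, is the technical heart of the argument; the remaining bookkeeping, that the sequence of moves terminates, is routine.
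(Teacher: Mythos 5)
Note first that the paper itself does not prove this statement: it is quoted, with citation, from Chen, Lih and Wu \cite{Chen}, so there is no in-paper argument to compare against, and your proposal has to be judged as a self-contained proof of the cited result. Your reductions and easy cases are correct: $\chi_=(Q)\ge\chi(Q)$ is trivial, $K_4$ is immediate, and the edge count $3|A|=|E(Q)|=3|B|$ settles the bipartite case. The genuine gap is the case $\chi(Q)=3$, which is the entire content of the theorem. What you label ``the technical heart'' --- that if no single-vertex shift and no favorable Kempe swap makes progress, then $Q$ is forced into a short list of exceptional graphs --- is precisely the Chen--Lih--Wu theorem for maximum degree $3$; saying that ``one must show'' this dichotomy and that the rest is bookkeeping is a restatement of the problem, not a proof. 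Their actual argument is a multi-page case analysis, and nothing in your sketch substitutes for it.

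Moreover, one of the two invariants your framework relies on is false as stated. Swapping colors on a component of $Q[V_1\cup V_3]$ containing $a$ vertices of color $1$ and $b$ of color $3$ changes $n_1-n_3$ by $2(b-a)$, and this can overshoot rather than strictly decrease the imbalance. Concretely, with class sizes $(n_1,n_2,n_3)=(6,6,4)$, the subgraph $Q[V_1\cup V_3]$ could consist of a tree component with $a=5,b=2$ (two $V_3$-vertices of degree $3$ sharing one $V_1$-neighbor) together with a component of excess $-1$; swapping the first component yields sizes $(3,6,7)$ and swapping the second yields $(7,6,3)$, so every available Kempe swap strictly \emph{increases} the maximum difference, even though the coloring is not equitable and no single-vertex shift applies. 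Thus ``stuck'' does not mean what your dichotomy needs it to mean (no moves exist); it only means no progress-making move exists, which is a weaker and harder condition to exploit. Repairing this --- selecting components of excess exactly $1$, or swapping along partial chains, and proving such a choice always exists outside the exceptional graphs --- is again exactly the substance of \cite{Chen}.
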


\begin{corollary}
If $G$ is a connected cubic graph then $$2 \leq \chi_=(G) \leq 4.$$ \hfill $\Box$
\end{corollary}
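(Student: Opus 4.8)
The plan is to derive this entirely from the preceding theorem, which asserts $\chi(Q) = \chi_=(Q)$ for every connected cubic graph $Q$. Consequently it suffices to establish the two-sided bound $2 \leq \chi(Q) \leq 4$ for the ordinary chromatic number and then transport it to $\chi_=(Q)$ via that equality.

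For the upper bound I would invoke the elementary greedy argument: since $Q$ is cubic we have $\Delta(Q) = 3$, and the standard bound $\chi(Q) \leq \Delta(Q) + 1$ immediately yields $\chi(Q) \leq 4$. If one prefers the sharper statement, Brooks' theorem gives $\chi(Q) \leq 3$ for every connected cubic graph except $K_4$, with $\chi(K_4) = 4$ realizing the extreme case; either way the bound $\chi(Q) \leq 4$ holds.

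For the lower bound I would simply observe that a cubic graph contains at least one edge---indeed every vertex has positive degree---so $Q$ is not edgeless and therefore cannot be properly colored with a single color; hence $\chi(Q) \geq 2$.

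Combining the two bounds gives $2 \leq \chi(Q) \leq 4$, and applying the equality $\chi_=(Q) = \chi(Q)$ from the theorem converts this directly into $2 \leq \chi_=(Q) \leq 4$, as required. I do not anticipate any genuine obstacle: the statement is a routine corollary, its only content being the reduction through the cited theorem together with the textbook bounds $\chi(Q) \geq 2$ (the graph is non-trivial) and $\chi(Q) \leq \Delta(Q) + 1$ (greedy coloring).
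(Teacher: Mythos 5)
Your proposal is correct and matches the paper's (implicit) reasoning: the corollary is stated there without proof precisely because it follows immediately from the Chen--Lih--Wu equality $\chi(Q)=\chi_=(Q)$ combined with the trivial bounds $2\leq\chi(Q)\leq\Delta(Q)+1=4$ for a connected cubic graph. Your additional remark via Brooks' theorem is a harmless refinement but not needed.
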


In the case of equitable edge coloring, it is known, for example from~\cite{ZhangWang}, that the equitable chromatic index for any graph is equal to its chromatic index. 

\begin{theorem}[\cite{ZhangWang}]
Let $G$ be a simple graph. Then $$\chi_='(G) = \chi'(G).$$
\end{theorem}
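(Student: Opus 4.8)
The plan is to establish the two inequalities separately. The bound $\chi'(G) \le \chi_='(G)$ is immediate, since every equitable proper edge coloring is in particular a proper edge coloring; thus an equitable coloring attaining $\chi_='(G)$ colors already witnesses $\chi'(G) \le \chi_='(G)$. The real content is the reverse inequality $\chi_='(G) \le \chi'(G)$, that is, any optimal proper edge coloring can be rebalanced into an equitable one without introducing new colors.

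First I would fix an arbitrary proper edge coloring of $G$ using $k = \chi'(G)$ colors, with color classes (matchings) $M_1, \dots, M_k$. The idea is to repeatedly apply Kempe-chain swaps to equalize the class sizes. Concretely, while some pair of classes satisfies $|M_i| - |M_j| \ge 2$, I would pick a class $M_i$ of maximum size and a class $M_j$ of minimum size and consider the subgraph $H$ whose edges are exactly those colored $i$ or $j$. Since at every vertex at most one edge of each of the two colors is present, $H$ has maximum degree at most $2$, so each of its components is a path or an even cycle on which the colors $i$ and $j$ alternate.

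The key step is the following balancing observation. On a cycle the two colors occur equally often, and on a path with an even number of edges they also occur equally often; only on a path with an odd number of edges does one color occur once more than the other, namely the color appearing on both end-edges. Hence the total surplus $|M_i| - |M_j| = \sum_{\text{components}} (\#i - \#j)$ is a sum of terms each equal to $-1$, $0$, or $+1$. As this surplus is at least $2$, at least one component $P$ must be an odd path on which color $i$ is in the majority. Swapping colors $i$ and $j$ along $P$ keeps the coloring proper (the swap affects only the vertices of $P$, where the two colors are merely interchanged) and transfers exactly one edge from $M_i$ to $M_j$, decreasing $|M_i| - |M_j|$ by $2$.

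Finally I would argue termination via the potential $\Phi = \sum_{t=1}^{k} |M_t|^2$. Each swap replaces $|M_i|^2 + |M_j|^2$ by $(|M_i| - 1)^2 + (|M_j| + 1)^2 = |M_i|^2 + |M_j|^2 - 2(|M_i| - |M_j|) + 2 \le |M_i|^2 + |M_j|^2 - 2$, so $\Phi$ strictly decreases at every step. Being a nonnegative integer, $\Phi$ cannot decrease forever, so after finitely many swaps no pair of classes differs in size by more than one; the resulting coloring is proper, uses only the original $k = \chi'(G)$ colors, and is equitable, giving $\chi_='(G) \le \chi'(G)$. The main obstacle is guaranteeing progress and termination simultaneously: a single swap could in principle unbalance other pairs, and the role of the monovariant $\Phi$, rather than a naive "fix one pair at a time" argument, is precisely to certify that the global process halts.
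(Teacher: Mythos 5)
Your proposal is correct. Note, however, that the paper itself offers no proof of this statement at all: it is quoted as a known result, with the proof delegated to the cited references (the author's PhD thesis, Kubale's monograph, and the paper of Wang and Zhang). So there is no internal argument to compare against; what you have done is supply the missing classical proof, and your argument is essentially the standard one from that literature. The two inequalities are handled exactly as one should: the easy direction is definitional, and the substantive direction uses Kempe-chain rebalancing on the two-colored subgraph $H$ induced by a maximum class $M_i$ and a minimum class $M_j$, whose components are alternating paths and even cycles, so that a surplus $|M_i|-|M_j|\geq 2$ forces an odd path whose end-edges carry color $i$; swapping along it is proper and shrinks the gap by $2$. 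Your termination argument via $\Phi=\sum_t |M_t|^2$ is valid and even slightly more robust than needed: since a swap alters only the sizes of $M_i$ and $M_j$ (lowering the maximum-gap pair and never pushing the former minimum class more than one above any other class), a more naive monovariant would also work, but the potential function cleanly certifies global halting in one line. One small point of care that you handled correctly: the swap is proper precisely because the path component is maximal in $H$, so neither endpoint sees the other color outside the path.
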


\begin{theorem}[\cite{ZhangWang}]
Every graph $G$ has an equitable edge $k$-coloring for each $k \geq \chi_='(G)$.\label{contedge}
\end{theorem}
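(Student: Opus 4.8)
The plan is to establish this continuation property by the standard potential-minimization (Kempe-chain swap) argument, the same idea underlying de Werra's balanced edge-colouring theorem. By the previous theorem we already know $\chi_='(G) = \chi'(G)$, so $G$ admits at least one proper edge colouring using $k$ colours whenever $k \geq \chi_='(G)$; for $k$ strictly larger than $\chi'(G)$ one simply leaves some colour classes empty. Hence for every such $k$ the family of proper edge $k$-colourings of $G$ is nonempty, and it suffices to exhibit one that is equitable.

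First I would fix $k \geq \chi_='(G)$ and, among all proper edge $k$-colourings $\{C_1,\dots,C_k\}$ of $G$ (with $C_i$ the set of edges receiving colour $i$), select one minimizing the potential
$$\Phi = \sum_{i=1}^{k} |C_i|^2.$$
I claim this minimizer is already equitable, i.e.\ $\big||C_i|-|C_j|\big| \leq 1$ for all $i,j$. Suppose not; then some pair of colours satisfies $|C_i| \geq |C_j| + 2$.

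The main step is the structural analysis of the two-coloured subgraph. Consider the subgraph $H_{ij}$ spanned by the edges of $C_i \cup C_j$. Since $C_i$ and $C_j$ are matchings, every vertex of $H_{ij}$ has degree at most $2$, so $H_{ij}$ is a disjoint union of paths and cycles whose edges alternate between colours $i$ and $j$. Each cycle is necessarily even and contributes equally to both classes, so the surplus $|C_i|-|C_j| \geq 2$ is carried entirely by the paths; consequently there exists a path $P$ whose two end edges both have colour $i$, and such a $P$ has exactly one more $i$-edge than $j$-edge. Interchanging the two colours along $P$ keeps the colouring proper (the change is confined to one component of $H_{ij}$) and replaces $|C_i|,|C_j|$ by $|C_i|-1,|C_j|+1$. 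A direct computation gives $\Delta\Phi = -2(|C_i|-|C_j|-1) < 0$ because $|C_i|-|C_j| \geq 2$, contradicting the minimality of $\Phi$. Therefore the minimizer satisfies $\big||C_i|-|C_j|\big| \leq 1$ for all $i,j$, which is exactly an equitable edge $k$-colouring.

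I expect the only delicate point to be the structural claim that the imbalance forces a colour-$i$-dominated path available for flipping: once the decomposition of $H_{ij}$ into alternating paths and even cycles is in place, the swap and the strict monotonicity of $\Phi$ are routine, and the argument is uniform across all cases (including $k > |E(G)|$, where it yields classes of size $0$ or $1$). Since $k \geq \chi_='(G)$ was arbitrary, the conclusion holds for every such $k$.
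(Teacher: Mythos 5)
Your proof is correct. Note that the paper itself offers no proof of this statement: it is quoted as a known result with citations to the literature (the PhD thesis of Furma\'nczyk, Kubale's monograph, and Wang--Zhang), so there is no internal argument to compare yours against. What you give is the classical balanced-edge-colouring argument in the style of de Werra and McDiarmid, and every step checks out: proper edge $k$-colourings exist for all $k \geq \chi_='(G) = \chi'(G)$ (pad with empty classes); the union $C_i \cup C_j$ of two matchings has maximum degree $2$, hence decomposes into alternating paths and even cycles; cycles are balanced, so an imbalance $|C_i| - |C_j| \geq 2$ forces a path whose both end edges have colour $i$ and which carries a surplus of exactly one $i$-edge; swapping along that path preserves properness (its endpoints have degree $1$ in $C_i \cup C_j$, so no conflict is created) and decreases $\sum_i |C_i|^2$ by $2(|C_i|-|C_j|-1) \geq 2$, contradicting minimality. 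This is almost certainly the same mechanism used in the cited sources, so in effect you have supplied the missing proof rather than an alternative to one.
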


Let us recall also Vizing theorem.

\begin{theorem}[\cite{Vizing}]
Let $G$ be a graph. Then $$\Delta(G) \leq \chi'(G) \leq \Delta(G)+1.$$\label{viz}
\end{theorem}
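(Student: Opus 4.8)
The plan is to prove the two inequalities separately; the left-hand bound is immediate, and essentially all of the work lies in the right-hand bound, which is Vizing's theorem proper. For the lower bound, fix a vertex $x$ with $\deg(x)=\Delta(G)$. The $\Delta(G)$ edges incident to $x$ are pairwise adjacent, hence must receive pairwise distinct colors in any proper edge coloring, so $\chi'(G)\ge\Delta(G)$.

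For the upper bound I would induct on $|E(G)|$, using the classical fan-and-Kempe-chain argument. Removing an edge $uv_1$ yields a graph with $\Delta(G-uv_1)\le\Delta(G)$, so by the inductive hypothesis $G-uv_1$ admits a proper edge coloring $c$ with the color set $\{1,\dots,\Delta(G)+1\}$; the task is to recolor so that $uv_1$ too can be colored. Write $M(x)$ for the set of colors absent at $x$ under $c$; since $\deg(x)\le\Delta(G)<\Delta(G)+1$, every vertex satisfies $M(x)\neq\emptyset$. The central object is a maximal \emph{Vizing fan} at $u$: a sequence of distinct neighbours $v_1,v_2,\dots,v_k$ with $uv_1$ the uncolored edge and $c(uv_{i+1})\in M(v_i)$ for $1\le i<k$. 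Picking $\alpha\in M(u)$ and $\beta\in M(v_k)$, if $\beta\in M(u)$ I simply rotate the fan, reassigning $c(uv_i)\leftarrow c(uv_{i+1})$ for $i<k$ — legal precisely because each such color was missing at the corresponding $v_i$ — which frees $uv_k$ to receive $\beta$.

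When $\beta\notin M(u)$ the argument passes to a Kempe chain. I would consider the maximal $\alpha\beta$-alternating path $P$ with one endpoint at $u$ and swap the two colors along it. Because $\alpha\in M(u)$ while $\beta\notin M(u)$, the path leaves $u$ along a $\beta$-edge, and maximality of the fan pins the far endpoint of $P$ to be one of the fan vertices — concretely either $v_k$ or the earlier vertex $v_j$ whose incoming fan edge carries color $\beta$. The main obstacle is exactly this localization: one must argue that these are the only two possibilities, and that in each of them the swap leaves a configuration in which rotating the fan up to the relevant vertex and assigning the now-freed color completes a proper coloring of all of $G$, without disturbing any previously colored incident edge. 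Checking that these two sub-cases are exhaustive and mutually consistent is the one genuinely delicate part of the proof; the lower bound and the fan rotation are routine bookkeeping.
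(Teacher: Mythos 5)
First, a point of comparison: the paper offers no proof of this statement at all --- it is Vizing's theorem, quoted verbatim with a citation to Vizing's 1964 paper and used as a black box --- so your attempt can only be judged on its own merits. Your overall architecture is the correct classical one: the lower bound from the pairwise-adjacent edges at a maximum-degree vertex, then induction on $|E(G)|$ with a maximal fan at $u$, rotation when $\beta\in M(u)$, and a Kempe-chain argument otherwise.

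However, the Kempe-chain case contains a genuine gap, and it sits exactly at the step you flag as ``delicate'' and then defer. The claim that maximality of the fan ``pins the far endpoint of $P$ to be one of the fan vertices --- either $v_k$ or $v_j$'' is false: the maximal $\alpha\beta$-alternating path starting at $u$ may terminate at an arbitrary vertex of $G$; nothing about the fan constrains where it ends. The correct pivot of the classical argument is different. Writing $uv_j$ for the fan edge carrying color $\beta$, the fan property gives $\beta\in M(v_{j-1})$, and by choice $\beta\in M(v_k)$; thus $v_{j-1}$, $v_k$ and $u$ are all endpoints of $\alpha\beta$-chains, and since the chain through $u$ is a single path it can reach \emph{at most one} of $v_{j-1}$, $v_k$. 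Hence at least one $w\in\{v_{j-1},v_k\}$ lies on a chain avoiding $u$; swapping colors on that chain makes $\alpha$ missing at $w$ while leaving $M(u)$ untouched, after which one rotates the fan up to $w$ and colors $uw$ with $\alpha$. Even then one must verify the swap has not destroyed the fan property along the prefix being rotated (this uses the facts that no fan edge has color $\alpha$ and only $uv_j$ has color $\beta$). Your version, taken literally, breaks down whenever the chain from $u$ exits the fan --- which certainly happens --- and the exhaustiveness check you postpone is precisely the point where the misstated localization claim would have to be, and cannot be, proved.
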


For more information about equitable vertex and edge colorings we refer to~\cite{Hanna}. 

As we have already mentioned, cubic graphs are one of graph classes that the Equitable Total Coloring Conjecture holds for. We have

\begin{theorem}[\cite{wang,subc}]\label{thm:cont}
Every cubic graph $G$ can be equitably total colored with $k$ colors for every $k \geq 5$.
\end{theorem}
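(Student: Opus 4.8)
The plan is to build the coloring so that every color class has \emph{exactly} the same size. Since a cubic graph satisfies $\sum_{v}\deg(v)=3n$ with $n=|V(Q)|$, the order $n$ is even, and $Q$ has $3n/2$ edges, so $|V(Q)\cup E(Q)|=5n/2$. Hence an equitable total $5$-coloring must place exactly $n/2$ elements in each of the five classes, and it suffices to exhibit a proper total $5$-coloring in which each color is used $n/2$ times. We may assume $Q$ is connected: a disconnected cubic graph is colored component by component, and since each component has even order the exact per-class counts simply add.

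Assume first that $Q$ is bridgeless. By Theorem~\ref{bridgeless} it has a perfect matching $M$, with $|M|=n/2$. Color every edge of $M$ with color $5$; this fills class $5$ with exactly $n/2$ elements. The graph $F=Q-M$ is $2$-regular, i.e. a disjoint union of cycles covering all $n$ vertices and carrying the remaining $n$ edges. At each vertex $v$ the incident elements are $v$, its two $F$-edges, and one $M$-edge of color $5$; so it remains to color the $2n$ elements of $V(Q)\cup E(F)$ with colors $\{1,2,3,4\}$ in a proper total coloring of $F$, and the constraint from $M$ is automatically met once $v$ and its two $F$-edges receive three distinct colors from $\{1,2,3,4\}$. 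Because $2n/4=n/2$, the target is again exactly $n/2$ elements per color.

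The next step is a balancing argument on the cycles of $F$. For a cycle $C_\ell$ (which has $2\ell$ incident elements) I would first establish the auxiliary fact that it admits a proper total $4$-coloring that is as balanced as possible: each color used $\ell/2$ times when $\ell$ is even, and two colors used $(\ell+1)/2$ times while two are used $(\ell-1)/2$ times when $\ell$ is odd; moreover the roles of the colors may be permuted arbitrarily by relabeling. Even cycles then contribute perfectly balanced counts. For the odd cycles I would exploit that $\sum_i \ell_i=n$ is even, so their number is even; pairing them and assigning complementary color permutations (one cycle over-using $\{1,2\}$, its partner over-using $\{3,4\}$) cancels the $\pm\tfrac{1}{2}$ discrepancies pairwise, yielding exactly $n/2$ elements in each of classes $1,2,3,4$. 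Together with class $5$ this is the desired equitable total $5$-coloring.

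The obstacle is twofold. The delicate point in the bridgeless case is the global balancing just described: one must verify the per-cycle balanced-coloring lemma for all residues of $\ell$ and confirm that the pairing of odd cycles can always be carried out while each cycle stays properly colored, which is where careful bookkeeping is needed. The second, harder obstacle is the case where $Q$ has bridges, since then $Q$ may have no perfect matching (hence no $2$-factor) and the construction above does not apply verbatim; here I would argue by induction, cutting $Q$ along a bridge $uv$ into its two (necessarily odd-order) sides, equitably coloring each side with the pendant half-edge reserved, and then stitching the colorings across the bridge while re-balancing the classes within the one-unit slack permitted by the definition. Controlling the accumulated imbalance through repeated bridge surgeries is the crux of that case.
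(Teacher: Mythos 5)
This statement is quoted by the paper from the literature (Wang's theorem, reference \cite{wang}); the paper itself offers no proof, so your attempt can only be judged on its own terms --- and it has a genuine gap at its core. In the bridgeless case you color each cycle of $F=Q-M$ by an independent balanced total $4$-coloring, but you never enforce the constraint that the two endpoints of a matching edge receive different colors. Those endpoints are adjacent in $Q$, yet they lie in different cycles of $F$ (or at non-consecutive positions of the same cycle), so no per-cycle properness, and no permutation of color names chosen merely to balance cardinalities, prevents them from getting equal colors. Already $K_4$ exhibits the failure: there $F$ is a $4$-cycle whose two diagonals form $M$, and your balanced pattern $1,2,1,2$ on the vertices assigns the same color to both endpoints of each matching edge; a correct coloring of $K_4$ must give all four vertices distinct colors, which is incompatible with the cycle-by-cycle lemma as you stated it. Coordinating the cycle colorings with the matching adjacencies is precisely the hard part of total-coloring cubic graphs, and it is where the real work in Wang's proof (and in Rosenfeld's and Vijayaditya's proofs of TCC for $\Delta=3$) lies; your construction skips it entirely.

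The second gap you acknowledge yourself, but it is worth stressing that it is not mere bookkeeping. When $Q$ has bridges there may be no perfect matching at all, so the entire architecture (a color class of size exactly $n/2$ formed by $M$, plus a $2$-factor) is unavailable, not just harder to balance. Your proposed induction also does not close: cutting at a bridge produces pieces that are not cubic (each has a vertex of degree $2$), so the statement being proved does not apply to them, and you would need a strengthened hypothesis about graphs with maximum degree $3$ and reserved half-edges, together with explicit control of how class sizes recombine --- none of which is supplied. As written, the proposal proves the theorem for no class of graphs; both the bridgeless construction and the bridged reduction need essential new ideas.
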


In the further part of our paper we will use equitable total $(n+4)$-coloring of a $n$-vertex cubic graphs $G$. Such a coloring exists due to Theorem~\ref{thm:cont}. Now, we give some properties of such a coloring.

\begin{proposition}
In any equitable total $(n+4)$-coloring of $n$-vertex cubic graph $G$ the cardinalities of color classes are between $1$ and $3$.\label{clca}
\end{proposition} 
\begin{proof}
By the contrary, let us assume that there is at least one color class of the cardinality at least 4. Since the coloring is equitable, the remaining color classes are of cardinality at least 3. This means that the number of elements (vertices and edges) 
in $G$ is not less than $4+3(n+3)=3n+13$, while we know that this number is equal to $5/2n$. Contradiction.

On the other hand, since $5n/2 > n +4$, the cardinalities of color classes are obviously greater or equal to 1.
\end{proof}

Let $S_T(G)$ be the total coloring sequence of an equitable total $(n+4)$-coloring of $n$-vertex cubic graph $G$.
Let $\#_j(S_T(G))$ denote the number of terms (color classes) in $S_T(G)$ of cardinality $j$, $j=1,2,3$.

\begin{proposition}\label{totalsequence}
Let $G$ be a $n$-vertex cubic graph colored in an equitable total way with $n+4$ colors.

\begin{enumerate}
\item[\emph{($i$)}] If $4 \leq n \leq 14$ then:
\begin{align*}
\#_1(S_T(G)) &=8-n/2,\\
\#_2(S_T(G)) &= 3n/2 -4,\\
\#_3(S_T(G)) &=0.
\end{align*}

\item[\emph{($ii$)}] If $n \geq 16$ then:
\begin{align*}
\#_1(S_T(G)) &=0,\\
\#_2(S_T(G)) &= n/2 +12,\\
\#_3(S_T(G)) &=n/2 -8.
\end{align*}
\end{enumerate}\label{termS_T}
\end{proposition}
\begin{proof}
It is easy to observe that the only value of $n$ when all terms in $S_T(G)$ are equal to 2 is $n=16$. It is enough to solve equation $2(n+4)=5n/2$. For smaller number of vertices, terms 
in a  sequence $S_T(G)$ are equal to 1 and 2, for bigger ones - to 2 and 3. Now, the values of $\#_j(S_T(G))$, $j=1,2,3$, are the solutions of system of equations 
for $j=1$ in Case (i) and for $j=2$ in Case (ii):
$$
\left\{
\begin{array}{ll}
j\cdot\#_j(S_T(G)) +(j+1)\cdot\#_{j+1}(S_T(G)) &=5n/2\\
\#_j(S_T(G))+\#_{j+1}(S_T(G)) &= n +4.
\end{array}\right.
$$
\end{proof}

\section{Equitable total coloring of semi-coronas}

\begin{lemma}
Let $G\circ_s H$ be a semi-corona of cubic graphs: $n_G$-vertex graph $G$ and $n_H$-vertex graph $H$. Then $$\chi_=''(G\circ_s H) = \Delta(G\circ_s H)+1=n_H+4.$$ \label{lmsemicorona}
\end{lemma}
\begin{proof}
Since $\chi_=''(T) \geq \Delta(T)+1$ for any graph $T$, all we need is to 
construct an equitable total $(n_H+4)$-coloring of $G\circ_s H$. Do as follows:
\begin{enumerate}
\item Color equitably edges of cubic graph $G$ with $n_H + 4$ colors in such a way that the corresponding edge color 
sequence $S_E(G)$ is equal to $(l_e(1), l_e(2), \ldots, l_e(n_H+4))=(\lceil (3/2n_G)/(n_H+4) \rceil, \lceil (3/2n_G-1)/(n_H+4) \rceil, \ldots, \lceil (3/2n_G-n_H-3)/(n_H+4) \rceil)$,
where $l_e(i)$ denotes the number of edges in $G$ colored with $i$.

Since $3 \leq \chi'_=(G) \leq 4$ for every cubic graph $G$ and we color edges of $G$ with at least 8 colors, this step is possible due to Theorem \ref{contedge}.

\item Extend this coloring into any proper total $(n_H+4)$-coloring of $G$. \label{krok2}

Let us assume that all edges and some vertices of $G$ have been already colored. Notice that for every uncolored vertex $v \in V(G)$ at most six colors are forbidden - the colors assigned to 
three incident edges and at most three adjacent vertices, if they have already been colored. 
Since we have $n_H+4 \geq 8$ colors, then there are at least two allowed colors for every vertex $v$. We can choose one of them.
Let $S_V(G):=(l_v(1), l_v(2), \ldots, l_v(n_H+4))$ be the corresponding vertex coloring sequence. Of course, $S_E(G)+S_V(G)$ is the total coloring sequence of $G$.

\item Extend the total coloring of $G$ into an equitable total $(n_H+4)$-coloring of semi-corona $G \circ_s H$ by coloring properly semi-edges of $G\circ_s H$, i.e. elements of an open fan $F(v)$ for every $v \in V(G)$.

Note that exactly 4 colors are not allowed to color semi-edges from $F(v)$. Let $c(F(v))$ denote the set of all allowed colors for semi-edges from $F(v)$. Since $|c(F(v))|=n_H$, the coloring of $F(v)$ is determined, with an accuracy to the permutations of $c(F(v))$. 

We claim that the total coloring of $G \circ_s H$ obtained in the way described above is equitable. Indeed, let us notice that color $i$ used to color vertex $v\in V(G)$ implies $i \not\in c(F(v))$ while color $i$ used to color edge $e=uv\in E(G)$ implies $i \notin c(F(u))$ and $i \notin c(F(v))$. 
Thus, the fact that a color $i$ is used to color $l_v(i)$ vertices and $l_e(i)$ edges in $G$ means that the color $i$ will appear in $n_G-l_v(i) - 2l_e(i)$ sets of available colors $c(F(v))$ and this means that it can be used to 
color $n_G-l_v(i) - 2l_e(i)$ semi-edges. Thus, color $i$ is used $l_v(i)+l_e(i)+n_G-l_v(i)-2l_e(i)=n_G-l_e(i)$ times. Since
the sequence $(l_e(1), l_e(2), \ldots,l_e(n_H+4))$
from the first step was equitable, then the sequence $(n_G-l_e(1),\ldots,n_G-l_e(n_H+4))$ is also equitable. Thus
the extended total coloring of $G \circ_s H$ is equitable.
\end{enumerate}
\end{proof}

\section{Main result}

In this section we prove the main theorem of this paper.

\begin{theorem}
Let $G$ and $H$ be cubic graphs on $n_G$ and $n_H$ vertices, respectively. Then
$$\chi_=''(G \circ H) = \Delta(G \circ H) +1 = n_H +4.$$ \label{main}
\end{theorem}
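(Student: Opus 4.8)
The plan is to prove the two inequalities separately. The lower bound is immediate: every equitable total coloring is in particular a total coloring, so $\chi_=''(G\circ H)\ge \chi''(G\circ H)\ge \Delta(G\circ H)+1$. Here each apex vertex $v_i$ of the center copy of $G$ has degree $3+n_H$ while every vertex of an outer copy of $H$ has degree $3+1=4$; since $n_H\ge 4$ (the smallest cubic graph is $K_4$), we get $\Delta(G\circ H)=n_H+3$ and hence the bound $n_H+4$. For the upper bound I would build an explicit equitable total $(n_H+4)$-coloring in two stages, using Lemma~\ref{semicorona} as the engine, and the construction should manifestly ignore the Types of $G$ and $H$.

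For the first stage, observe that if one cuts every connecting edge at its endpoint on the center copy, then $G$ together with the connecting edges is exactly the semi-$n_H$-corona $G^{s-n_H-c}$. As $n_H\ge 4$, Lemma~\ref{semicorona} furnishes an equitable total $(n_H+4)$-coloring of $V(G)\cup E(G)$ together with all connecting edges. I would carry over from that proof the two structural facts that drive everything: the $n_H$ connecting edges at $v_i$ receive precisely the colors of $\{1,\dots,n_H+4\}\setminus f(v_i)$ (so they biject with the non-special colors, where $f(v_i)$ is the $4$-set consisting of $c(v_i)$ and its three $G$-edge colors), and color $c$ is used exactly $n_G-l_e(c)$ times on this central part, a quantity that varies by at most one as $c$ ranges over the palette.

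The second stage is to extend the coloring to each outer copy, that is, to color $V(H_i)\cup E(H_i)$. The boundary data seen by $H_i$ is mild: each vertex $u_j$ forbids only the apex color $c(v_i)$ and the color $c_j$ of its own connecting edge, and each edge $u_ju_k$ of $H_i$ forbids only the two connecting colors $c_j,c_k$; moreover all of these forbidden colors except $c(v_i)$ lie outside $f(v_i)$. Since the palette has $n_H+4\ge 8$ colors while each element of $H_i$ meets only a bounded number of already-colored neighbors, I would argue that a proper total coloring of each petal exists, and in fact that each petal can be colored \emph{equitably with the full palette}, each color being used about $\tfrac{5n_H/2}{n_H+4}$ times, the large slack absorbing the two local forbidden colors per element.

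The hard part is to make the \emph{whole} coloring equitable, and this is where the petal colorings must be coordinated rather than chosen copy-by-copy. The central part already uses color $c$ exactly $n_G-l_e(c)$ times, so I would aim to distribute the unavoidable ``ceiling'' slots of the petals (the colors a given petal uses one extra time) evenly across the palette and, crucially, align them against $n_G-l_e(c)$, steering the surplus toward the colors for which $n_G-l_e(c)$ is smallest so that the two $\pm 1$ fluctuations cancel instead of adding; a round-robin (Latin-square-type) assignment of ceiling slots over the $n_G$ copies should make the summed usage of every color differ by at most one. The main obstacle, and the place needing the most care, is exactly this simultaneous enforcement of properness and equitability in the petals, which is tightest when $n_H=4$: there the palette has exactly $8=\Delta(G\circ H)+1$ colors and an outer vertex $u_j$ is incident to $4$ edges and adjacent to $4$ vertices, so one must force a color coincidence among these eight elements to keep a free color at $u_j$ while simultaneously holding the global frequencies balanced.
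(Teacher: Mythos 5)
Your overall architecture matches the paper exactly---trivial lower bound, Lemma~\ref{semicorona} for the center graph plus connecting edges, then per-copy extension coordinated to keep the global coloring equitable---but your proposal leaves unproven precisely the step that constitutes the bulk of the paper's proof: that each copy $H_i$ actually admits a \emph{proper} coloring realizing a prescribed near-equitable color sequence. Your justification (``the large slack absorbing the two local forbidden colors per element'') is a counting heuristic, not an argument, and it breaks exactly in the cases you yourself flag as tight: for $n_H=4$ the required sequence on the $10$ elements of $H_i=K_4$ is $(2,2,1,1,1,1,1,1)$ over $8$ colors, and whether it is realizable depends entirely on how the petal's vertex colors interact with the colors of the connecting edges. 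The paper resolves this by imposing structural conditions already at the vertex-coloring stage of each copy---condition~(\ref{card4}) for $n_H=4$, the perfect-matching condition~(\ref{card68}) for $n_H\in\{6,8\}$ (which needs Theorem~\ref{bridgeless}), and condition~(\ref{card}) for $n_H\geq 10$---and these conditions are then what make the edge extensions go through, proved in five separate lemmas with a case analysis $n_H=4$, $6$, $8$, $10$--$14$, $\geq 16$ resting on Lemma~\ref{termS_T} and on claims bounding $\#_j(S_E)$. Your plan colors petal vertices subject only to local properness; an unconstrained choice can make the prescribed sequence unrealizable (e.g., for $n_H=4$ a vertex color may be forbidden on every edge of $K_4$, so no color class of size $2$ involving it exists), so the gap is not cosmetic.

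Your global coordination step is also under-specified in a way that matters. Which colors can absorb a surplus inside copy $H_i$ is not a free choice to be made by a round-robin schedule: a color in $f(v_i)$ behaves differently from a color that sits on a connecting edge (and hence, in the paper's scheme, also on a vertex of $H_i$), and the feasibility caps of the paper's claims (inequalities (\ref{eq1})--(\ref{eq4})) limit how many colors can appear $2$ or $3$ times on $E(H_i)$ at all. So you cannot simply steer ceiling slots toward the colors where $n_G-l_e(c)$ is smallest; the sequences $S_T(H_i)$ must be chosen compatibly with both the incremental equitability requirement (property~(\ref{property})) and these per-copy feasibility constraints, and then realizability must be proved. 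In short: your stage one is correct and identical to the paper's, and you correctly isolate where the difficulty lies, but the proposal stops at the point where the actual proof begins.
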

\begin{proof} Let $n_H\geq 6$. For such cases the main idea of the proof is:
\begin{enumerate}
    \item to color semi-corona $G \circ_s H$ in an equitable total way with $n_H+4$ colors; we get the total equitable coloring sequence $S_T(G \circ_s H)$;
    \item to color graph $H_1$ in an equitable total way with $n_H+4$ colors;
    \item to make a permutation of colors in the coloring of $H_1$ in such a way that $S_T(G\circ_s H)+S_T(H_1)$ is an equitable total coloring sequence of the extended semi-corona $G_1$; 
    \item to show that the coloring of $H_1$ may be ''joined'' with the coloring of semi-corona to obtain a proper total coloring of $G_1$.
\end{enumerate}
Given the equitable total coloring of the extended semi-corona $G_i$, $i=1, \ldots, n_G-1$ with the corresponding total coloring sequence $S_T(G_i)$, 
\begin{enumerate}
\setcounter{enumi}{4}
    \item color graph $H_{i+1}$ in an equitable total way with $n_H+4$ colors;
    \item make a permutation of colors in the coloring of $H_{i+1}$ in such a way that $S_T(G_i)+S_T(H_{i+1})$ is an equitable total coloring sequence of the extended semi-corona $G_{i+1}$;
    \item ''join'' the coloring of $H_{i+1}$ with the coloring of $G_i$ to obtain a proper total coloring of $G_{i+1}$.
\end{enumerate}

Since $G_{n_G}=G\circ H$, finally we get an equitable total coloring of the whole corona $G \circ H$. Now, all we need is to clarify the above steps and to
show that they are possible to do.

\begin{description}
\item[Ad Step 1] We color semi-corona $G \circ_s H$ with $n_H+4$ colors due to the way given in the proof of Lemma \ref{lmsemicorona}.
\item[Ad Steps 2 and 5] An equitable total $(n_H+4)$-coloring of copies of $H$ is possible to achieve due to Theorem \ref{thm:cont}. 
\item[Ad Steps 3 and 6] Since in these steps graph $H$ is treated independently on the structure of the appropriate extended semi-corona, such permutation is possible to do.
\item[Ad Steps 4 and 7] Let $V(G)=\{v_1,\ldots,v_{n_G}\}$ and  $V(H)=\{u_1,\ldots,u_{n_H}\}$. To show that the equitable total coloring of a copy of $H$ may be 'joined'' with the coloring of the appropriate semi-graph we will prove that for every close claw $I_{H}[u_j]$, $1 \leq j \leq n_H$ there exists semi-edge $e$ in $F_H(v_i)$ colored with $c(e)$ such that if we assign the same color $c(e)$ to the edge $v_iu_j$ then we get proper (partial) total coloring of $G \circ H$, $1\leq i \leq n_G$. 

We order colors in $c(F_H(v_i))$ in such a way that six last terms of this order denote colors of cardinality at most 2 in $S_T(H_i)$. Due to 
Proposition \ref{totalsequence} it is possible to do for any $n_H \geq 6$. We get a sequence of colors $(x_1,\ldots, x_H)$. There are at least $n_H-6$ colors in $c(F_H(v_i))$ that can be used to change one semi-edge $e=v_i\cdot$ into an edge $v_iu_1 \in E(G\circ H)$ and to get a proper total coloring. We repeat this for vertices $u_2,\ldots,u_{n_H-6}$. Finally, we have six ''unjoined'' vertices $u_{n_H-5},\ldots, u_{n_H}$ in $H$ and six unassigned semi-edges $e_1,\ldots, e_6$ in the appropriate open fan $F_H(v_i)$. Let $X_k$ denote the set of colors out of $\{c(e_1),\ldots,c(e_6)\}$ that can be assigned to an edge $v_iu_{k}$, $n_H-5\leq k\leq n_H$. We get the family $\mathcal{X}=\{X_{n_H-5}, \ldots, X_{n_H}\}$. 
In the equitable coloring of $H$ the cardinalities of color classes corresponding to the colors of the semi-edges are at most two, $|P_{c(e_i)}| \leq 2$, $1 \leq i \leq 6$. Thus, only at most 12 elements out of 24 in $\bigcup_{n_H-5 \leq j \leq n_H} I_H[u_j]$ can be colored with colors $c(e_1),\ldots,c(e_6)$. So, each color belongs to at least four sets $X_i$ and $|X_i|\geq 2$. This means that for each subset of indexes $K$ of a family $\mathcal{X}$ we get $|\bigcup_{k\in K}X_k| \geq |K|$. Thus, there exists a transversal for $\mathcal{X}$ by Hall's marriage theorem \cite{hall}. In other words, we are able to select one representative for each set $X_i$ in such a way that no two sets from $\mathcal{X}$ get the same representative.  
Thus, the equitable total coloring of a copy of $H$ may be 'joined'' with the coloring of the appropriate semi-graph, if only $n_H \geq 6$. 

\end{description}

Now, we need to prove only the correctness of the theorem for $n_H=4$. The algorithm of the equitable total coloring of $G \circ H$ where $n_H=4$ is as follows:
\begin{enumerate}
    \item color semi-corona $G \circ_s K_4$ in an equitable total way with $8$ colors due to the algorithm given in the proof of Lemma \ref{lmsemicorona}; we get the total equitable coloring sequence $S_T(G \circ_s K_4)$;
    
    \item determine an equitable total coloring sequence of a length 8 for $H_1$: $S_T(H_1)$, such that $S_T(G \circ_s K_4)+S_T(H_1)$ results in an equitable total coloring sequence of the extended semi-corona $G_1$;
    \item transform colored $G \circ_s K_4$ into the partially colored extended semi-corona $G_1$ in such a way that colors of vertices and edges of $G$ are not changed, while the colors of $F_H(v_1)$ are assigned arbitrarily to the edges joining $H_1$ with $G$ in $G_1$,
    \item color $H_1$ due to $S_T(H_1)$. Since only two its terms (colors) are of value 2, we color $H_1=K_4$ in such a way that color of cardinality 2 are assigned to one vertex and one edge in $H_1$. The rest of colors are used only once in the coloring of $H_1$. It is easy to verify that it leads us to the proper equitable total coloring of $G_1$.
\end{enumerate}

We generalize Steps 2--4 for next copies of $K_4$ and execute them until we get an equitable total 8-coloring of the whole corona $G \circ K_4$.
\end{proof}

\section{Final remarks}
Since in the proof of Lemma \ref{lmsemicorona} we did not use the fact that $G$ is cubic, we may generalize the lemma to the following

\begin{corollary}
Let $G$ be an $r$-regular graph and let $H$ be a cubic graph on $n_H$ vertices where $n_H\geq r-3$. Then
$$\chi_=''(G\circ_s H) = \Delta(G\circ_s H)+1=n_H+4.$$
\end{corollary}

\noindent Finally we get
\begin{corollary}
Let $G$ be an $r$-regular graph and let $H$ be a cubic graph on $n_H$ vertices where $n_H\geq r-3$.
Then
$$\chi_=''(G \circ H) = \Delta(G \circ H) +1 = n_H +4.$$ 
\end{corollary}

Many interesting questions remain still open, for instance the equitable total colorability of coronas of $r$-regular graphs with $r>3$. We hope that our paper will be a source of inspiration to answer this question.

\end{document}